\documentclass[11pt]{article}
\usepackage{amsmath, amsthm, amscd, amsfonts, amssymb, graphicx, color}
\usepackage[utf8]{inputenc}
\usepackage{hyperref}
\usepackage[titlenumbered,linesnumbered,ruled,noend,algosection]{algorithm2e}

\newtheorem{thm}{Theorem}[section]

\newtheorem{lem}[thm]{Lemma}

\newtheorem{ques}[thm]{Question}

\numberwithin{equation}{section}

\begin{document}
\title{An Improved Lower Bound for General Position Subset Selection}
\author{Ali Gholami Rudi\thanks{
{Department of Electrical and Computer Engineering},
{Bobol Noshirvani University of Technology}, {Babol, Iran}.
Email: {\tt gholamirudi@nit.ac.ir}.}}
\date{}

\maketitle

\begin{abstract}
In the General Position Subset Selection (GPSS) problem, the goal is to find
the largest possible subset of a set of points such that no three of
its members are collinear.
If $s_{\textrm{GPSS}}$ is the size of the optimal solution,
$\sqrt{s_{\textrm{GPSS}}}$ is the current best guarantee
for the size of the solution obtained using a polynomial time algorithm.
In this paper we present an algorithm for GPSS to improve this bound
based on the number of collinear pairs of points.
We experimentally evaluate this and few other GPSS algorithms;
the result of these experiments suggests further opportunities for
obtaining tighter lower bounds for GPSS.
\\
\\
{\small \textbf{Keywords}: General Position Subset Selection, Collinearity testing, Computational geometry}\\
\end{abstract}

\section{Introduction}
A subset of a set of $n$ points in the plain is in general position
if no three of its members are on the same line.  The NP-complete
\textsc{General Position Subset Selection} (GPSS) problem asks for the largest
possible such subset.
This problem, the fame of which is partly due to the
fact that several algorithms in computational geometry assume
that their input points are in general position,
has received relatively little attention in its general setting.
The well-known \textsc{No-Three-In-Line} problem, which is a special
case of GPSS, asks for the maximum number of points, no three of which
are collinear in an $n \times n$ grid.  A lower
bound of $(3/2 - \epsilon) n$ was proved for this problem \cite{hall75} and
it is conjectured that the best lower bound
for large $n$ is $cn$ \cite{guy68}, in which $c$ is $\frac{\pi}{\sqrt{3}}$.
\textsc{No-Three-In-Line} has also been extended to three dimensions \cite{por07}.

Lower bounds for GPSS were proved by Payne and Wood for the
case in which the number of collinear points is bounded \cite{payne13}.
More precisely,
if no more than $\ell \le \sqrt{n}$ of input points are collinear,
they showed that the size of the largest subset of points in general position
is $\Omega ( \sqrt{ \frac{n}{\ln \ell}})$.
More recently Froese et al.\ proved that GPSS
is NP-complete and APX-hard \cite{froese16}.  They
also presented several fixed-parameter tractability
results for this problem, including a kernel of size $15 k^3$.

A problem closely related to GPSS is \textsc{Point Line Cover} (PLC).
The goal in PLC is to find the minimum
number of lines that cover a set of points.  This problem
has been minutely studied and an approximation algorithm with
performance ratio $\log (s_{\textrm{PLC}})$ has been presented for this
problem \cite{kratsch16},
in which $s_{\textrm{PLC}}$ is the size of the optimal solution to PLC.
PLC can be used to prove bounds for GPSS \cite{cao12}:
given that at most two points can be selected from each
line of a line cover,
clearly $s_{\textrm{GPSS}} \le 2 \cdot s_{\textrm{PLC}}$,
in which $s_{\textrm{GPSS}}$ is the size of the optimal solution to GPSS.
Also, since ${s_{\textrm{GPSS}} \choose 2}$
lines are defined for a set of $s_{\textrm{GPSS}}$ points in general
position and since all points outside the optimal solution to GPSS
should be on at least one such line (due to its maximality), we have
$s_{\textrm{PLC}} \le {s_{\textrm{GPSS}}}^2$.

Cao presented a greedy algorithm for GPSS which works as follows \cite{cao12}.
Let $S$ be an empty set initially.
For each point $p$ in the set of input points $P$ in some arbitrary order,
add $p$ to $S$, unless it is on a line formed by the points present in $S$.
It is easy to see that in $S$ no three points can be collinear.
On the other hand, due to its incremental construction, $S$ is
maximal and no point in $P \setminus S$ can be added to $S$.
This algorithm achieves the best known approximation ratio for
GPSS \cite{froese16}.  Since each point in an optimal solution $Q$
outside $S$ cannot be added to $S$, it should be on a line defined
by the points in $S$, and since there are
${|S| \choose 2}$ such lines and on
each of these lines at most two points of $Q$ can appear,
$|Q| \le |S| + 2 {|S| \choose 2}$.
This algorithm, therefore, finds a subset of size at least $\sqrt {s_{\textrm{GPSS}}}$.

In this paper, we try to improve this bound by reformulating the
problem using graphs and finding maximal independent sets in them.
Given a set $P$ of $n$ points,
the algorithm presented in this paper finds a subset in general position
with $\max \{ 2 n^2 \mathbin{/} (\mathrm{coll} (P) + 2n) , \sqrt{s_{\textrm{GPSS}}} \}$
points, in which $\mathrm{coll} (P)$ is the total number of collinear pairs in lines
with at least three points in $P$ (Theorem~\ref{tlow}).
We experimentally evaluate this and three other GPSS algorithms.
Our results show that a modification of the algorithm described
in the previous paragraph experimentally obtains larger sets and
may be used to identify a better lower bound for GPSS.

The paper is organized as follows: in Section~\ref{spre}, we define
the notation used in this paper and in Section~\ref{slow}, we describe
our algorithm.  We start Section~\ref{sgen} with a discussion about
the challenges of generating GPSS test cases and how the test cases
used in this paper were obtained.  We then report the result of
our experiments and finally in Section~\ref{scon} we conclude this
paper.

\section{Preliminaries and Notation}
\label{spre}

Let $P$ be a set of $n$ points in the plane.  Three or
more points of $P$ are \emph{collinear} if there is a line that
contains all of them.
Let $L$ be the set of all maximal collinear subsets of $P$.
For each point $p$ in $P$, let $L (p)$ be the subset of $L$
containing all elements of $L$ that contain $p$.
Also, let $N (p)$ denote the union of
all members of $L (p)$, excluding $p$ itself.
We define $\mathrm{coll} (p)$ as the size of $N (p)$,
$\mathrm{coll} (Q)$ for a subset $Q$ of $P$ as the sum of $\mathrm{coll} (q)$
for every $q$ in $Q$, and
$\overline{\mathrm{coll}}(Q)$ as the average value of $\mathrm{coll} (q)$ for
every point $q$ in $Q$.

A subset $Q$ of $P$ is \emph{noncollinear} if, for every $q$ in $Q$,
no point in $L(q)$ is present in $Q$.
The \emph{collinearity graph} $G$ of a finite set $P$ of points
is the graph that has a vertex for each point in $P$;
in this paper we use the same symbol to represent a point
in $P$ and its corresponding vertex in $G$.
Two vertices $p$ and $q$ are adjacent in $G$
if and only if $p$ is in $N(q)$.
It can be observed that the degree of a vertex $p$ in $G$
equals $\mathrm{coll} (p)$.
Figure~\ref{coll} demonstrates these definitions in a
small example.

\begin{figure}
	\centering
	\includegraphics[width=8cm]{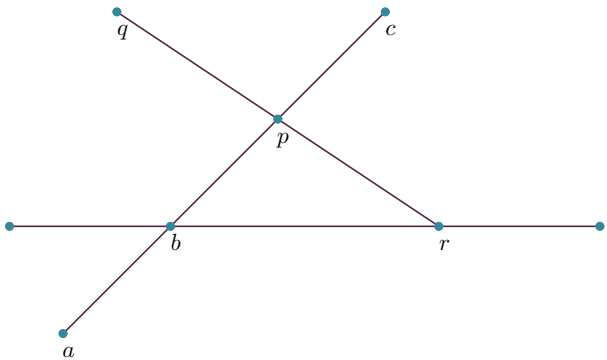}
	\caption{An example configuration of 8 points.
	Here $L(p) = \{\{p, q, r\}, \{a, b, c, p\}\}$,
	$N(p) = \{a, b, c, q, r\}$,
	$\mathrm{coll}(p) = 5$, and $\mathrm{deg}_G(p) = 5$.}
	\label{fcoll}
\end{figure}

\section{New Lower Bound}
\label{slow}

Before describing our algorithm, we present two lemmas as follows.

\begin{lem}
\label{lmis}
Any set $P$ of $n$ points contains two disjoint noncollinear subsets
$R$ and $T$ such that
$|R| + |T| \ge 2n \mathbin{/} ( \overline{\mathrm{coll}} (P) + 2)$.
\end{lem}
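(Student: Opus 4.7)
The plan is to prove the lemma probabilistically, reducing it to a purely graph-theoretic statement about the collinearity graph $G$ of $P$. Noncollinear subsets are exactly independent sets of $G$, the degree of a vertex $p$ in $G$ equals $\mathrm{coll}(p)$, and $\overline{\mathrm{coll}}(P)$ is the average degree $\bar{d}$ of $G$. So I need to show that any graph on $n$ vertices has two vertex-disjoint independent sets whose combined size is at least $2n/(\bar{d}+2)$.

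My construction combines a random two-coloring with the Caro--Wei bound. First I would color each vertex of $G$ independently with label $R$ or $T$, each with probability $1/2$, giving induced subgraphs $G_R$ and $G_T$. Then I assign independent uniform ranks $x_v \in [0,1]$ to all vertices, and take $R$ to consist of those $R$-labeled vertices whose rank is smaller than that of every $R$-labeled neighbor, and $T$ analogously. This is the randomized Caro--Wei independent set inside each color class; both $R$ and $T$ are independent in $G$, and they are automatically disjoint because the label classes are.

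The heart of the argument is to lower-bound $P(v \in R\cup T)$ for a fixed vertex $v$ with $d_v = \mathrm{coll}(v)$. Conditional on $v$ being $R$-labeled, the $R$-degree of $v$ is $\mathrm{Bin}(d_v,1/2)$, and using the standard identity $E[1/(X+1)] = (1-(1-p)^{n+1})/((n+1)p)$ for $X \sim \mathrm{Bin}(n,p)$, I would compute $P(v \in R \mid v\text{ is }R\text{-labeled}) = 2(1 - 2^{-(d_v+1)})/(d_v+1)$. Halving for the unconditional probability and doubling by symmetry with $T$ yields $P(v \in R\cup T) = 2(1-2^{-(d_v+1)})/(d_v+1)$. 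The desired bound $P(v \in R \cup T) \ge 2/(d_v+2)$ then reduces after cross-multiplication to the elementary inequality $2^{d_v+1} \ge d_v+2$, which holds for every integer $d_v \ge 0$ (by an easy induction). Summing over $v$ and invoking Jensen's inequality applied to the convex function $x \mapsto 2/(x+2)$ gives $E[|R|+|T|] \ge \sum_v 2/(d_v+2) \ge 2n/(\bar{d}+2)$, so some realization of the randomness meets the required bound.

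The hard part will be executing the per-vertex calculation cleanly: evaluating $E[1/(\mathrm{Bin}(d_v,1/2)+1)]$ in closed form, reducing the resulting inequality to $2^{d_v+1} \ge d_v+2$, and invoking Jensen in the correct direction. Naive attempts (greedy two-coloring with a single ordering, for instance) only yield bounds of the form $|R|+|T| \ge n(1-\bar{d}/c)$ for small constants $c$, which degenerate for large $\bar{d}$ and miss tight instances such as the complete graph; the random two-coloring is what keeps the bound meaningful in the dense regime, matching the $K_n$ extremal case where $2n/(\bar{d}+2) \to 2$.
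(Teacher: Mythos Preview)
Your proof is correct but takes a genuinely different route from the paper's. The paper argues deterministically: it greedily partitions $V(G)$ into $P_1,P_2$ so that at least half the edges cross the cut, deletes those crossing edges to obtain a graph $H$ whose average degree is at most $\overline{\mathrm{coll}}(P)/2$, applies Tur\'an's bound to get a maximal independent set $S$ in $H$ of size at least $2n/(\overline{\mathrm{coll}}(P)+2)$, and then sets $R=S\cap P_1$, $T=S\cap P_2$; these are independent in $G$ because no within-part edges were removed. Your probabilistic argument instead couples a random $2$-coloring with the Caro--Wei selection inside each color class and carries out the per-vertex computation $\mathbb{E}\bigl[1/(\mathrm{Bin}(d_v,\tfrac12)+1)\bigr]$. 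This actually yields the sharper intermediate bound $\mathbb{E}[|R|+|T|]\ge\sum_{v}2/(\mathrm{coll}(v)+2)$, from which the stated bound follows by Jensen. The paper's proof is shorter and immediately algorithmic; yours buys a degree-sequence refinement for free (before Jensen) at the price of the binomial identity and the inequality $2^{d+1}\ge d+2$. One small correction to your closing remark: the paper's proof \emph{is} essentially a greedy two-coloring---the trick that makes it work is running the independent-set step on $H$ (with cross edges removed) rather than separately on the two color classes with their full $G$-degrees.
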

\begin{proof}
Let $G = (V, E)$ be the collinearity graph of $P$.
The vertex set of $G$ can be decomposed into two subsets $P_1$
and $P_2$ such that at least half of the edges of $G$
have one endpoint in each of these sets.
This can be done as follows.  Start with empty $P_1$ and $P_2$.
For every vertex $v$ of $G$ in some order, add it to
$P_1$ if $v$ has fewer neighbours in $P_1$ compared to $P_2$,
and add it to $P_2$, otherwise.
Let $P_1$ and $P_2$ be such a decomposition of $V$ and let $H$ be the graph
obtained from $G$ by removing all edges between $P_1$ and $P_2$.
Clearly, since the number of the edges of $H$ is at most half of that of $G$,
the average degree of $H$ is also at most half of the average degree of $G$.

Let $S$ be a maximal (that is, non-extensible) independent set in $H$ and
let $R = S \cap P_1$ and $T = S \cap P_2$.
Both $R$ and $T$ are independent sets in $H$.
Since no edge between the vertices of $P_1$ (and hence $R$)
is removed in $H$, $R$ is also an independent set in $G$.
Symmetrically, $T$ is also an independent set in $G$.

Turan's lower bound of $n \mathbin{/} ( \overline{d} + 1 )$ for the size of a maximal independent set in
a graph with $n$ vertices,
in which $\overline{d}$ is the average degree of the graph,
can be attained using the greedy algorithm that iteratively
selects vertices ordered increasingly by their degrees and
removes the selected vertex and its neighbours \cite{halldorsson97}.
Applying Turan's bound to $H$ yields that $\overline{d}$ is
at most half of the average degree of $G$.
Hence, we have $\overline{d} \le \overline{\mathrm{coll}} (P) \mathbin{/} 2$,
implying that $|S| \ge 2n / ( \overline{\mathrm{coll}} (P) + 2)$, as required.
\end{proof}

The lower bound for the greedy algorithm used in Lemma~\ref{lmis}
is not the best possible; it is actually a weaker form of
the celebrated Caro-Wei lower bound, which
has been improved by several authors (see, for instance, \cite{halldorsson97} and \cite{harant11}).
However, Turan's bound, which is tight for graphs consisting of
disjoint cliques, depends only on the average degree and
yields a cleaner bound for the size of
noncollinear sets in Lemma~\ref{lmis}.

\begin{lem}
\label{llow}
Let $R$ and $T$ be two noncollinear subsets of a set of points $P$.
Then, the points in $R \cup T$ are in general position.
\end{lem}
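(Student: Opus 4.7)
The plan is to argue by contradiction: suppose some three points $p_1,p_2,p_3 \in R\cup T$ are collinear, and derive that one of $R$ or $T$ fails to be noncollinear.

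First I would unpack what ``noncollinear'' means in terms of the set $L$ of maximal collinear subsets of $P$. Since $N(q)$ is the union (minus $q$) of all members of $L(q)$, the condition ``no point in $N(q)$ is in $Q$ for every $q\in Q$'' is exactly the statement that no two distinct points of $Q$ lie together in any $C\in L$, i.e., no two points of $Q$ lie together on a line of $P$ that contains at least three points of $P$.

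Next, given the three hypothetical collinear points $p_1,p_2,p_3\in R\cup T$, I would observe that they lie on some line $\ell$; since these three points already belong to $P$, the line $\ell$ meets $P$ in at least three points, so the maximal collinear extension of $\{p_1,p_2,p_3\}$ is some $C\in L$ containing all three. By the pigeonhole principle, at least two of $p_1,p_2,p_3$ lie in the same set, say $R$ (the case of $T$ being symmetric). Those two points lie together in $C\in L$, so by the reformulation of the previous paragraph, $R$ is not noncollinear, contradicting the hypothesis.

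There is no real obstacle here; the only subtlety is the bookkeeping that a line through three points of $R\cup T\subseteq P$ is automatically a line through at least three points of $P$, and hence corresponds to a genuine element of $L$, so the definition of noncollinearity is triggered. The disjointness of $R$ and $T$ is not needed for this lemma (the union is general position regardless), which is worth noting explicitly.
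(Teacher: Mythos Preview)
Your proof is correct and is essentially the same argument as the paper's: the paper argues directly that each line through at least three points of $P$ contains at most one point of $R$ and at most one point of $T$, hence at most two points of $R\cup T$, while you phrase the contrapositive via pigeonhole. Your observation that disjointness of $R$ and $T$ is unnecessary is a valid additional remark not made in the paper.
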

\begin{proof}
For each line with at least three points in $P$,
each of $R$ and $T$ can include at most one point.
Therefore, in their
union there are at most two points on each such line in $P$.
\end{proof}

We now present our algorithm for finding a subset $S$ of a
set of points $P$ in general position (Algorithm~\ref{agpss}) and
prove a lower bound for the size of the set it finds (Theorem~\ref{tlow}).

\begin{algorithm}[!ht]
    \caption{General position subset selection}
    \label{agpss}

    \SetKwInOut{Input}{Input}
    \SetKwInOut{Output}{Output}

    \Input{A set $P$ of points in the plane.}
    \Output{A subset $S$ of $P$ in general position.}
        Obtain all maximal subsets $L$ of collinear points in $P$.
        One way to do this is to move the points to the dual plane;
        three (or more) points are collinear if their corresponding lines
        in the dual plane intersect each other at a common point.
        Regular plane sweeping can identify these intersections.\

	Construct the collinearity graph $G$ from $L$ for the set of input points $P$.\

        Decompose the vertices of $G$ into two sets $P_1$ and $P_2$,
        such that there are at least $\mathrm{coll}(P) / 2$ edges with one
        end point in each of these sets, as described in Lemma~\ref{lmis}.
	Obtain graph $H$ from $G$ by removing all edges between $P_1$ and $P_2$.\

        Find a maximal independent set $S$ in $H$ as described in Lemma~\ref{lmis}.\

        Add each point in $P \setminus S$ to $S$, if it and no
        two points in $S$ are collinear.
\end{algorithm}
\begin{thm}
\label{tlow}
Any set $P$ of $n$ points in the plane contains a
subset in general position of size at least
$\max \{ 2 n^2 \mathbin{/} (\mathrm{coll} (P) + 2n) , \sqrt{s_{\textrm{GPSS}}} \}$.
\end{thm}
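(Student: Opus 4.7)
The plan is to prove the two lower bounds separately and then note that Algorithm~\ref{agpss} only enlarges the set, so the final output $S$ satisfies both.

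First I would handle the $2n^2/(\mathrm{coll}(P)+2n)$ term. Steps 1--4 of the algorithm construct exactly the noncollinear sets $R = S \cap P_1$ and $T = S \cap P_2$ from the proof of Lemma~\ref{lmis}, so that lemma gives $|R| + |T| \ge 2n/(\overline{\mathrm{coll}}(P) + 2)$. Substituting $\overline{\mathrm{coll}}(P) = \mathrm{coll}(P)/n$ turns the right-hand side into $2n^2/(\mathrm{coll}(P) + 2n)$. By Lemma~\ref{llow}, $R \cup T$ is already in general position; since $R \cup T$ equals the maximal independent set $S$ found in step~4, and step~5 can only grow $S$, the output has at least this many points.

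Next I would prove the $\sqrt{s_{\mathrm{GPSS}}}$ term by imitating the analysis of Cao's greedy algorithm. After step~5, the output set $S$ is a \emph{maximal} subset of $P$ in general position: any point $p \in P \setminus S$ would have been added unless some pair in $S$ is collinear with $p$, i.e.\ unless $p$ lies on one of the $\binom{|S|}{2}$ lines spanned by pairs in $S$. Let $Q$ be an optimal GPSS solution with $|Q| = s_{\mathrm{GPSS}}$. Points of $Q \cap S$ contribute at most $|S|$, and each of the $\binom{|S|}{2}$ lines contains at most two points of $Q$ (as $Q$ is in general position), so
\[
s_{\mathrm{GPSS}} \;=\; |Q| \;\le\; |S| + 2\binom{|S|}{2} \;=\; |S|^2,
\]
which rearranges to $|S| \ge \sqrt{s_{\mathrm{GPSS}}}$. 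Combining the two bounds gives the claimed maximum.

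There is no real obstacle here; the only thing to be careful about is the algebraic conversion between $\overline{\mathrm{coll}}(P)$ and $\mathrm{coll}(P)$, and the observation that step~5 preserves the first bound while simultaneously enforcing the maximality needed for the second bound. The rest is just repackaging Lemmas~\ref{lmis} and~\ref{llow} together with the standard greedy-maximality argument already recalled in the introduction.
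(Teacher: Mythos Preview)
Your proposal is correct and follows essentially the same route as the paper: invoke Lemma~\ref{lmis} (together with Lemma~\ref{llow}) for the $2n^2/(\mathrm{coll}(P)+2n)$ bound after step~4, and then use the maximality enforced by step~5 to reproduce Cao's $\sqrt{s_{\mathrm{GPSS}}}$ argument from the introduction. You simply spell out the intermediate algebra and the greedy-maximality inequality more explicitly than the paper does.
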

\begin{proof}
We use Algorithm~\ref{agpss} to prove this bound.
In the step 4 of the algorithm, $S$ has at
least $2n \mathbin{/} ( \overline{\mathrm{coll}} (P) + 2)$ points, as argued in Lemma~\ref{llow}.
Since $\mathrm{coll} (P) = n \cdot \overline{\mathrm{coll}} (P)$, the minimum size of $S$ can
be rewritten as $2n^2 \mathbin{/} ( \mathrm{coll} (P) + 2n)$.
In step 5, $S$ is made maximal.  By the same argument
mentioned in the introduction for the greedy GPSS algorithm,
it can be shown that after this step the size of $S$ is
at least $\sqrt{s_{\textrm{GPSS}}}$.
\end{proof}
To evaluate the improvement Theorem~\ref{tlow} yields compared to
Cao's algorithm, we experimentally compare them in the next section.

\section{Experimental Comparison of GPSS Algorithms}
\label{sgen}
To experimentally evaluate GPSS algorithms,
the first challenge is obtaining good test data
with a large number of collinear subsets of points.
To our knowledge, the available test data are mostly in
general position and thus are imperfect for evaluating GPSS algorithms.
Therefore, for our purpose it seems necessary to generate
point configurations with the desired rate of collinear points.
It is, however, very difficult to generate interesting configurations,
in which many points appear on each line and each point appears
on several lines, forming cycles with different lengths.
Indeed, it would be much easier to generate hypergraphs and
possibly convert them to point configurations.

A hypergraph can be created from a point configuration by
allocating a vertex in the hypergraph to each point and an
edge containing the vertices mapped to each maximal set
of collinear points.  Since lines can intersect in at
most one point, each pair of the edges in the resulting
hypergraph intersect in at most one vertex; hypergraphs
with this property are called \emph{linear}.
The other direction, i.e., obtaining a point configuration
from a linear hypergraph is not as trivial though, as the following
question states:

\begin{ques}
\label{qmap}
Given a linear hypergraph, is it possible to map its
vertices into points on the two-dimensional plane, such that three of
these points are collinear if and only if their corresponding vertices
are on the same edge of the hypergraph?
\end{ques}

There are hypergraphs for which the answer to the above question
is no; the Fano and Pappus configuration without the
Pappus line (the line containing the intersection points) are
two such examples.
It would be interesting to find the exact conditions for which
the answer to the above question is yes.
In other words, when do linear hypergraphs have a straight-line drawing?

The problem of drawing configurations is closely related to Question~\ref{qmap}:
a $k$-configuration with $n$ vertices, also denoted as
an $(n_k)$-configuration, is a linear hypergraph in which each
vertex is in exactly $k$ edges ($k$-regular) and each edge contains
exactly $k$ vertices ($k$-uniform) (for a detailed monograph on
configurations, the reader may consult \cite{grunbaum09} or \cite{pisanski12}).
Being defined more
than a century ago, $k$-configurations are one of the
oldest combinatorial structures, predating even the definition of
hypergraphs.
Both the identification of the combinatorial structure of $(n_k)$
configurations and their geometric realization have been the goal
of several mathematicians (Gropp discusses some of the history of
$k$-configurations \cite{gropp04} and their realization \cite{gropp97}).
Nevertheless, there is still very little known about the structure and the number
of geometric and even combinatorial $k$-configurations in general
(see for instance \cite{bokowski15}).

Even the existence of a fast algorithm for identifying
(or finding) geometric $k$-configurations may not help in answering
Question~\ref{qmap}; in a $k$-configuration, some collinear points
may not appear in an edge, while the mapping discussed before
Question~\ref{qmap} requires otherwise.

Question~\ref{qmap} also seems to be a generalization of the
famous stretchability problem for arrangements of pseudolines.
In the stretchability problem, one is given a simple arrangement of pseudolines (any
two pseudolines cross exactly once and no three pseudolines meet at one point), and
the question is whether the arrangement can be stretched, that is,
transformed into an equivalent arrangement of straight lines \cite{mnev88}.
For stretchability and similar problems, Schaefer introduced the
complexity class ER, for which these problems are complete \cite{schaefer09}.
In complexity-theoretic hierarchy, this class lies between NP and PSPACE.

The disappointing prospects of answering Question~\ref{qmap},
even when limited to $k$-configurations, suggest trying other
methods for generating interesting GPSS test data.

\subsection{Generating Random Hypergraphs}
\label{shyp}
The algorithms discussed in this paper do not directly use the
location of the points in the plane.
They extract point-line collinearity relations (or equivalently the underlying
linear hypergraph) as a first step;
therefore, the exact location of the points is not significant
in these algorithms.  Thus, instead of the locations of the points,
these algorithms can take the underlying hypergraph as input.

Although these algorithms expect a linear hypergraph
that can be drawn with straight lines, it would still be
interesting to evaluate their performance on general linear
hypergraphs, especially because of the difficulty of generating
configurations with complex point-line collinearity relations.
Thus, reminding that
the tests considered in this section target a more general problem
than GPSS, we generate random linear hypergraphs for testing GPSS
algorithms.

Our hypergraph construction is incremental, in contrast to Erd\H{o}s-R\'{e}nyi
random hypergraphs in which edges are either included or rejected with
some probability (see for instance \cite{dudek17}).
$H_d (n, m)$ is the hypergraph with $n$ vertices
and $m$ edges.  Initially all edges are empty.
The following step is repeated $dn$ times:
A vertex $v$ and an edge $e$ are chosen uniformly at random and $v$ is
added to $e$, if it does conflict with the linearity of the hypergraph.
Given that the above step is repeated $dn$ times, the average degree
is at most $d$.  The following probabilistic argument obtains a
lower bound for the expected average degree of the resulting hypergraph.

We denote with $Prob(R_i)$ the probability of rejection in the $i$-th step of the
algorithm, i.e., when adding vertex $v_i$ to edge $e_i$.
The rejection happens if i) $v_i$ already belongs to $e_i$, or
ii) $v_i$ is on an edge $e_j$,
one of whose vertices $v_k$ appears on $e_i$
(if $v_i$ is added to $e_j$, the hypergraph is no longer linear due to the intersections of $e_i$ and $e_j$).
Let $P_i (v \in e)$ be the probability of vertex $v$ being added to
edge $e$ in one of the first $i$-th steps.
\[
Prob(R_{i+1}) \le
P_i (v_i \in e_i) +
\sum_{j=1}^{m} \sum_{k=1}^{n}
P_i (v_i \in e_i)
P_i (v_k \in e_j)
P_i (v_k \in e_i)
\]

Given that $P_i (v \in e) \le \frac{i}{nm}$, the probability
of rejection can be simplified as follows.
\[
Prob(R_{i+1}) \le
\frac{i}{nm} + \frac{i^3}{nm^2}
\]

To obtain the expected value of the total number of rejections ($R$),
we add up $Prob(R_i)$ for all rounds of the algorithm (except
the first one, which is never rejected):
\[
Exp(R) \le
\sum_{i=2}^{dn} Prob(R_i) \le
\sum_{i=1}^{dn-1} \frac{i}{nm} + \sum_{i=1}^{dn-1} \frac{i^3}{nm^2}
\]

Simplifying this inequality, we obtain:
\[
Exp(R) < \frac{3d^2 n}{4m}
\]

Thus the expected average degree ($d_{\textrm{avg}}$) is:
\[
Exp(d_\textrm{avg}) \ge d - \frac{3 d^2} {4m}
\]

\subsection{Evaluating GPSS Algorithms}
We evaluate the following four GPSS algorithms.
\begin{description}
\item[Ind]
Algorithm~\ref{agpss}, which constructs the result by merging two
disjoint noncollinear subsets.
\item[Inc \cite{cao12}]
The greedy algorithm described in the Introduction,
which constructs a subset of input points by iteratively adding points
in some arbitrary order,
unless they are collinear with two other points in the set.
\item[Inc-Min]
Like Inc except that in each iteration the point with
the minimum number of collinear points (among the remaining points)
is selected.
\item[Dec]
Starts with all points in the set and iteratively removes
the points with the most number of collinearity relations with
other points in the set.
\end{description}

These algorithm are evaluated on two sets of test cases.  The first set
is obtained based on the method described in Section~\ref{shyp}.
The second set is generated by placing point on the two-dimensional
plane in some pattern, mostly grids.
The implementation of the algorithms and the test cases are
available at \url{https://github.com/aligrudi/gpss}.

Table~\ref{tsize} shows the average ratio of the size of the
set returned by any of the algorithms to the largest reported set.
The Inc-Min algorithm finds largest sets, outperforming
other algorithms by at least three percent on average.
Although the performance of Ind is close to Inc-Min,
there is an observable gap in their performance.

\begin{table}
\center
\caption{Average ratio of the size of the set returned by each algorithm
to the size of the best reported set}
\label{tsize}
\begin{tabular}{|l|c|}
\hline
Algorithm	& Average Ratio to the Best \\
\hline
Ind	& 96.6 \\
Inc	& 82.2 \\
Inc-Min	& 99.4 \\
Dec	& 87.4 \\
\hline
\end{tabular}
\end{table}

\begin{table}
\center
\caption{The ratio of the number of times each algorithm finds the largest set}
\label{tmax}
\begin{tabular}{|l|c|}
\hline
Algorithm	& Ratio of Best Performance \\
\hline
Ind	& 41.3 \\
Inc	& \ 9.5 \\
Inc-Min	& 92.1 \\
Dec	& 20.6 \\
\hline
\end{tabular}
\end{table}

Table~\ref{tmax} shows the ratio of the number of cases in
which each algorithm finds the largest set in general position;
this again suggests that Inc-Min obtains the largest sets most of the time.
These results suggest that it may be possible to obtain a better
lower bound for GPSS, based on Inc-Min.

\section{Concluding Remarks}
\label{scon}

GPSS can also be formulated in terms of Hypergraph strong
independence.  However, it is very surprising that the extensive
studies on independence number of hypergraphs are mostly focused on
weak independence (see \cite{eustis13} for a summary),
in which the independent set can include any
but not all of the vertices of each edge.

\section*{Acknowledgements}
The author would like to thank William Kocay
for the helpful discussion about geometric $k$-configurations.

\bibliographystyle{unsrt}

\end{document}